\newtheorem{theorem}{Theorem}
\newtheorem{definition}{Definition}
\title{Notes on Lattices, Homomorphic Encryption, and CKKS}
\author{Vir Pathak }
\date{February 2022}
\begin{document}

\maketitle

\section{Introduction}

In cryptography, computing on encrypted data is a major goal which has so far proven elusive. However recent progresses in this area (called homomorphic encryption) have made it possible for such a technology to soon be deployed in the real world. 
\\
\\
In a homomorphic encryption scheme, you have the desirable property that $$\text{Decryption}(\text{Encyption}(a + b)) = \text{Decryption}(\text{Encryption}(a)) + \text{Decryption}(\text{Encryption}(b))$$ and similarly $$\text{Decryption}(\text{Encryption}(ab)) = \text{Decryption}((\text{Encryption}(a))(\text{Encryption}(b)))$$ for real or complex messages $a$ and $b$.
\\
\\
Why is such technology so desirable? Consider the following scenario: imagine you are a hospital which stores vast archives of sensitive medical data and you want an analytics company to train their prediction algorithm on your medical data. Then once trained, you can use such an algorithm to detect diseases in new patients. However, you do not want such sensitive medical information to be leaked to an untrusted third party. The solution is homomorphic encryption, which will allow the company to compute on and train on encrypted data. 
\\
\\
How are people currently trying to create such homomorphic encryption schemes? In 2009, Oded Regev wrote a famous paper reducing the hardness of a linear algebra problem called the Learning With Errors Problem to the quantum hardness of GapSVP. This work won him the 2018 Godel Prize. Moreover, this allows cryptographers to base their schemes off the hardness of LWE, which is extremely versatile and also allows for homomorphic encryption schemes.  
\\
\\
LWE and its Ring variant allowed for cryptographers to devise many partially homomorphic schemes. That is, such schemes do not allow for unbounded addition/multiplication operations on encrypted data. Namely, each operation adds “error” to the encrypted ciphertext. Ultimately, this error grows too large to decrypt properly after enough encrypted computations (particularly multiplication). This is remedied with a technique called bootstrapping which was introduced in Craig Gentry’s PhD thesis in 2009. However, bootstrapping is extremely slow and costly. To bypass bootstrapping, Braskerski, Gentry, and Vaikuntanathan created the BGV scheme. While BGV is technically not fully homomorphic, it is a leveled homomorphic encryption scheme. That is, if one knows the polynomial they want to homomorphically compute beforehand, then they can set the BGV parameters accordingly so that decryption does not fail. 
\\
\\
The above methods make it easy to do homomorphic computations where the encrypted inputs are integers. However, a challenge comes when trying to do approximate arithmetic with real numbers. A naive approach is to scale a real number by a multiplying the input by a large integer (called a scaling factor, say $2^{64}$), do your homomorphic computations with the floor of the scaled number, and finally divide by the integer after the homomorphic computations. However, the problem is that the scaling factor itself increases exponentially after encrypted multiplications. For example, if $c_1$ and $c_2$ are encryptions of $100m_1$ and $100m_2$ respectively (here $100$ is the scaling factor), then homomorphically multiplying $c_1$ and $c_2$ gives $c_3$, which encrypts $10000m_1m_2$. The encryption scheme we talk about in these notes (CKKS) give a method to homomorphically rescale (ie go back to $100m_1m_2$). The security of CKKS comes from the Ring variant of LWE. CKKS also is able to enjoy some useful features by using clever techniques coming from Ring Theory, Linear Algebra, and cyclotomic extensions.  
\\
\\
This is a survey describing what I have learned from the CCS Suf program two years ago as well as what I learned from an REU at Clemson last summer. It will cover lattices, security definitions, and the recent CKKS homomorphic encryption scheme devised by Cheon, Kim, Kim, and Song. In the SUF program two years ago, my main tasks were to read and understand the CKKS paper, implement CKKS, and implement an algorithm optimizing CKKS parameter selection. 
\section{Lattices}
\begin{definition}
Let $v_1,...,v_n \in \mathbb R^m$ be linearly independent vectors. The \textit{lattice} $\mathcal{L}$ generated by $v_1,...,v_n$ is the set of \textit{integer} linear combinations $$\mathcal{L} = \{a_1v_1 +...+a_nv_n \mid a_1,...,a_n \in \mathbb Z\}.$$ The \textit{dimension} of $\mathcal{L}$ is the number of vectors in a basis for $\mathcal{L}$ (so in the above example $\{v_1,...,v_n$\} is the basis and the dimension is $n$). If $n = m$, we say the lattice $\mathcal{L}$ is \textit{full rank}. 
\end{definition}
\subsection{Important Lattice Results}
Let $\{v_1,...,v_n\}$ be a basis for a lattice $\mathcal{L}$ and let $w_1,...,w_m$ be some vectors in $\mathcal{L}$. Then we have 
\begin{align*}
    w_1 &= a_{11}v_1 +...+a_{1n}v_n \\
    & .\\
    & .\\
    & .\\
    w_n &= a_{n1}v_1 +...+a_{nn}v_n
\end{align*}
Now consider the matrix $$A = \begin{pmatrix}
a_{11} & a_{12} & . & . & . a_{1n}\\
.\\
.\\
.\\
a_{n1} & a_{n2} & . &. . & . a_{nn}
\end{pmatrix}.$$ For $v_i$ to be an integer linear combination of the $w_j$, we need $A^{-1}$ to have integer entries. Since $$1 = det(A)det(A^{-1}).$$ and both $det(A)$ and $det(A^{-1})$ are integers, this tells us that the determinant of both matrices are either $1$ or $-1$. Conversely, if $det(A) = \pm 1$ and $A$ has integer entries, then $A^{-1}$ must have integer entries. This gives us the following fact: 
\\
\\
\begin{theorem} Any two bases for a lattice $\mathcal{L}$ are related by a matrix with integer coefficients and a determinant of $\pm 1$. Such matrices are called $\textit{unimodular}$ matrices.
\end{theorem}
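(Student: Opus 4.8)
The plan is to exhibit the change-of-basis matrix explicitly and to show it is invertible over $\mathbb{Z}$, which forces its determinant to be a unit in $\mathbb{Z}$. First I would fix two bases $\{v_1,\dots,v_n\}$ and $\{w_1,\dots,w_n\}$ of $\mathcal{L}$; these have the same number of elements because the dimension of $\mathcal{L}$ is well defined (the $\mathbb{R}$-span of any basis equals the linear span of $\mathcal{L}$, an invariant of $\mathcal{L}$). Since each $w_i$ lies in $\mathcal{L}$ and $\{v_j\}$ is a basis, by definition of the lattice there are integers $a_{ij}$ with $w_i=\sum_j a_{ij}v_j$; collect these into $A=(a_{ij})$, an $n\times n$ integer matrix. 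Symmetrically — and this is the point the preceding paragraph glosses over — I would also expand each $v_j$ in the basis $\{w_i\}$, obtaining an integer matrix $B=(b_{ji})$ with $v_j=\sum_i b_{ji}w_i$.

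Next I would substitute one relation into the other: $w_i=\sum_j a_{ij}v_j=\sum_j a_{ij}\sum_k b_{jk}w_k=\sum_k\left(\sum_j a_{ij}b_{jk}\right)w_k$. Here the key input enters: because $w_1,\dots,w_n$ are linearly independent over $\mathbb{R}$, the coordinate representation of a vector in their span is unique, so comparing coefficients gives $\sum_j a_{ij}b_{jk}=\delta_{ik}$, i.e. $AB=I_n$. Hence $A$ is invertible with $A^{-1}=B$, an integer matrix.

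Finally I would take determinants: $\det(A)\det(B)=\det(AB)=\det(I_n)=1$. Since $A$ and $B$ have integer entries, $\det(A)$ and $\det(B)$ are integers (the determinant is an integer polynomial in the matrix entries), and the only way $1$ factors over $\mathbb{Z}$ is $1\cdot 1$ or $(-1)\cdot(-1)$, so $\det(A)=\pm 1$; that is, $A$ is unimodular, as claimed.

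I do not expect a serious obstacle: the only step that deserves care is the uniqueness of coordinates, which is exactly where linear independence of a basis (over $\mathbb{R}$) is used, and is the reason one must actually produce both matrices $A$ and $B$ rather than merely assert that $A^{-1}$ has integer entries. As a remark one can also record the converse: if $\{v_j\}$ is a basis of $\mathcal{L}$ and $U$ is unimodular, then the vectors $\sum_j U_{ij}v_j$ again form a basis, since $U$ maps $\mathbb{Z}^n$ onto $\mathbb{Z}^n$ (its inverse being integral, e.g. by Cramer's rule), so the two spanning sets generate the same set of integer combinations.
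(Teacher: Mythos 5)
Your proposal is correct and follows essentially the same route as the paper: express each basis in terms of the other with integer matrices and conclude $\det(A)=\pm 1$ from $\det(A)\det(A^{-1})=1$ over $\mathbb{Z}$. You merely make explicit a step the paper asserts without proof — constructing the second integer matrix $B$ and verifying $AB=I$ via uniqueness of coordinates — which is a worthwhile tightening but not a different argument.
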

A lot of hard lattice problems are centered around finding "short" vectors in the lattice. We can give a naive bound for the shortest vector, namely if we have a lattice basis $B$, then every lattice vector is as least as long as the vectors formed by applying the Gram Schmidt Algorithm to the lattice basis $B$.
\begin{theorem}
Let $B = \{b_1,...,b_n\}$ be a rank $n$ lattice basis and let $\mathcal{L}(B) \subset \mathbb R^m$ be the lattice generated by $B$. Moreover let $\widetilde{B} = \{\widetilde{b_1},...,\widetilde{b_n}\}$ be its Gram-Schmidt orthogonalization. Then $$\lambda_1(\mathcal{L}(B)) \geq \min_{i = 1,...,n}||\widetilde{b_i}|| > 0.$$ Here $\lambda_1(\mathcal{L}(B))$ denotes the length of the shortest nonzero vector in $\mathcal{L}(B)$. 
\end{theorem}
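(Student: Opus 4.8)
The plan is to bound $\|v\|$ from below for an arbitrary nonzero lattice vector $v \in \mathcal{L}(B)$ by projecting it onto the ``last'' Gram--Schmidt direction that actually occurs in its expansion. First I would write $v = \sum_{i=1}^n a_i b_i$ with $a_1,\dots,a_n \in \mathbb{Z}$ not all zero, and set $k = \max\{\, i : a_i \neq 0 \,\}$, the largest index whose coordinate is nonzero.

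Next I would use the defining relations of Gram--Schmidt orthogonalization, namely $b_i = \widetilde{b_i} + \sum_{j < i} \mu_{ij}\,\widetilde{b_j}$ for appropriate scalars $\mu_{ij}$, and substitute them into $v = \sum_{i=1}^k a_i b_i$ (the terms with $i > k$ drop out since $a_i = 0$ there). Collecting coefficients of the $\widetilde{b_j}$, one sees that the coefficient of $\widetilde{b_k}$ is exactly $a_k$, because every $b_i$ with $i < k$ contributes only to $\widetilde{b_j}$ with $j < k$. Hence $v = a_k\,\widetilde{b_k} + (\text{a linear combination of } \widetilde{b_1},\dots,\widetilde{b_{k-1}})$.

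Then I would pair $v$ with $\widetilde{b_k}$ in the inner product. Since the Gram--Schmidt vectors are pairwise orthogonal, all the lower-index terms vanish and $\langle v, \widetilde{b_k}\rangle = a_k\,\|\widetilde{b_k}\|^2$. By the Cauchy--Schwarz inequality, $\|v\|\,\|\widetilde{b_k}\| \geq |\langle v, \widetilde{b_k}\rangle| = |a_k|\,\|\widetilde{b_k}\|^2$. Because $a_k$ is a nonzero integer we have $|a_k| \geq 1$, and because $b_1,\dots,b_n$ are linearly independent each $\widetilde{b_i} \neq 0$, so $\|\widetilde{b_k}\| > 0$ and division yields $\|v\| \geq \|\widetilde{b_k}\| \geq \min_{i} \|\widetilde{b_i}\| > 0$. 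Taking the infimum over all nonzero $v \in \mathcal{L}(B)$ gives $\lambda_1(\mathcal{L}(B)) \geq \min_i \|\widetilde{b_i}\| > 0$, as claimed.

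The only delicate point — and the step I would be most careful about — is the bookkeeping in the second paragraph: one must verify that no $\widetilde{b_j}$ with index exceeding $k$ appears and that the coefficient of $\widetilde{b_k}$ is precisely $a_k$, with no extra contributions from the higher-index basis vectors. This is exactly why $k$ is chosen to be the \emph{largest} index with $a_k \neq 0$; once that is pinned down, the rest is just orthogonality and Cauchy--Schwarz.
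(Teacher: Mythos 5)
Your proposal is correct and follows essentially the same route as the paper: pick the largest index $k$ with nonzero integer coordinate, compute $\langle v, \widetilde{b_k}\rangle = a_k\|\widetilde{b_k}\|^2$ via orthogonality, and apply Cauchy--Schwarz together with $|a_k|\geq 1$. Your bookkeeping of why only $\widetilde{b_j}$ with $j\leq k$ appear is a slightly more careful justification of the inner-product computation than the paper gives, but the argument is the same.
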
 
\begin{proof}
Let $B$ also denote the $m \times n$ matrix formed by writing each $b_i$ as the $i$th column. Then note that we have $$\mathcal{L}(B) = \{Bx \mid x \in \mathbb Z^n\}.$$ Now take any $x \in \mathbb Z^n$. We will show $||Bx|| \geq \min_{i = 1,...,n}||\widetilde{b_i}||.$ For $x = (x_1,...,x_n)$, let $j$ denote the largest number such that $x_j \neq 0$. We have $$\left|\langle{Bx, \widetilde{b_j}}\rangle\right| = \left| \langle{\sum_{i = 1}^{n}x_ib_i, \widetilde{b_j}}\rangle \right|$$ $$= \left| \sum_{i = 1}^{n} x_i\langle{b_i\widetilde{b_j}}\rangle \right| = |x_j|||\widetilde{b_j}||^2$$  However also note that By Cauchy Schwarz we have $$\left| \langle{Bx, \widetilde{b_j}}\rangle \right| \leq ||Bx|| \cdot ||\widetilde{b_j}||.$$ Combining this with the above equality, we have that $$||Bx|| \geq \frac{\left|{\langle{Bx, \widetilde{b_j}}}\rangle\right|}{||\widetilde{b_j}||}$$ $$= |x_j| \cdot ||\widetilde{b_j}|| \geq ||\widetilde{b_j}|| \geq ||\widetilde{b_j}|| \geq \min_{i = 1,...,n}||\widetilde{b_i}|| > 0$$
\end{proof}
This gives a naive lower bound for the length of the shortest lattice vector. However it is enough to show that lattices can be defined independent of a latttice basis. Namely, lattices in $\mathbb R^m$ are nothing but discrete (additive) subgroups of $\mathbb R^m$. Before that, we give another an easy lemma.
\\
\\
\begin{definition} Given $n$ linearly independent vectors $v_1,...,v_n \in \mathbb R^m$, define their \textit{fundamental parallelepiped} to be $$P(v_1,...,v_n):= \{ \sum_{i = 1}^{n}y_iv_i \mid 0 \leq y_i < 1\}.$$
\end{definition}
\begin{theorem}
Let $\mathcal{L}$ be a full rank lattice in $\mathbb R^m$ and let $v_1,...,v_m$ denote linearly independent vectors in $\mathcal{L}$. Then $v_1,...,v_m$ form a lattice basis for $\mathcal{L}$ if and only if $P(v_1,...,v_m) \cap \mathcal{L} = \{0\}$. 
\end{theorem}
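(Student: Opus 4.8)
The plan is to prove the two implications separately, in both cases exploiting the fact that every vector of $\mathbb{R}^m$ has a \emph{unique} expression as a real linear combination of $v_1,\dots,v_m$. Indeed, these are $m$ linearly independent vectors in the $m$-dimensional space $\mathbb{R}^m$, hence a vector-space basis for $\mathbb{R}^m$; this is the only place the full-rank hypothesis is used.

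For the forward direction, suppose $v_1,\dots,v_m$ is a lattice basis for $\mathcal{L}$ and let $w \in P(v_1,\dots,v_m) \cap \mathcal{L}$. Membership in $\mathcal{L}$ gives $w = \sum_{i=1}^m a_i v_i$ with $a_i \in \mathbb{Z}$, while membership in $P(v_1,\dots,v_m)$ gives $w = \sum_{i=1}^m y_i v_i$ with $0 \le y_i < 1$. Uniqueness of the representation forces $a_i = y_i$ for every $i$, so each $a_i$ is an integer lying in $[0,1)$, i.e. $a_i = 0$. Hence $w = 0$, which is what we wanted.

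For the converse, assume $P(v_1,\dots,v_m) \cap \mathcal{L} = \{0\}$. Since each $v_i \in \mathcal{L}$, the lattice $\mathcal{L}(v_1,\dots,v_m)$ generated by these vectors is contained in $\mathcal{L}$, so it suffices to prove the reverse inclusion. Take an arbitrary $w \in \mathcal{L}$ and write $w = \sum_{i=1}^m y_i v_i$ with $y_i \in \mathbb{R}$. Set $w' = w - \sum_{i=1}^m \lfloor y_i \rfloor v_i = \sum_{i=1}^m (y_i - \lfloor y_i \rfloor) v_i$. Then $w'$ is an integer linear combination of the lattice vectors $w, v_1, \dots, v_m$, so $w' \in \mathcal{L}$, and since $0 \le y_i - \lfloor y_i \rfloor < 1$ we also have $w' \in P(v_1,\dots,v_m)$. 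By hypothesis $w' = 0$, which means every $y_i - \lfloor y_i \rfloor = 0$, i.e. every $y_i \in \mathbb{Z}$. Therefore $w \in \mathcal{L}(v_1,\dots,v_m)$, giving $\mathcal{L} \subseteq \mathcal{L}(v_1,\dots,v_m)$ and hence equality, so $v_1,\dots,v_m$ is a lattice basis for $\mathcal{L}$.

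There is no serious obstacle: the whole argument is bookkeeping with the uniqueness of coordinates. The two points that genuinely need attention are (i) invoking full rank so that the real-coefficient representation exists and is unique, and (ii) using that the parallelepiped $P$ is half-open ($0 \le y_i < 1$), which is exactly what makes the ``subtract off the integer parts'' vector $w'$ land inside $P$ rather than on its boundary. One could remark that the same reduction shows $P(v_1,\dots,v_m)$ is always a complete set of coset representatives for $\mathcal{L}/\mathcal{L}(v_1,\dots,v_m)$, and the theorem is the special case in which that quotient is trivial; no discreteness of $\mathcal{L}$ and no appeal to the earlier Gram--Schmidt bound is needed.
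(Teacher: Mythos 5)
Your proof is correct and follows essentially the same route as the paper's: the forward direction compares the integer and half-open-parallelepiped representations via uniqueness of coordinates, and the converse subtracts off the integer parts to land a lattice vector in $P(v_1,\dots,v_m)$. Your added remarks (where full rank is used, and the coset-representative interpretation) are accurate but do not change the argument.
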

\begin{proof}
Let $\{v_1,...,v_m\}$ be a lattice basis for $\mathcal{L}$ and let $v \in P(v_1,...,v_m) \cap \mathcal{L}$. Then since $v_1,...,v_m$ are linearly independent and a lattice, $v$ can be written uniquely as an integer linear combination of $v_1,...,v_m$. But by definition of $P(v_1,...,v_m)$, it must follow that all the scalars in the integer linear combination are zero. So $v = 0$ as claimed.
\\
\\
For the other direction, assume that $\{v_1,...,v_m\} \subset \mathcal{L}$ are such that $P(v_1,...,v_m) \cap \mathcal{L} = 0$. We show the $v_i$ form a lattice basis. Let $v \in \mathcal{L}$ be any lattice vector. Then since the $v_i$ are linearly independent, we know that $$v = \alpha_1v_1 +...+\alpha_mv_m$$ for real $\alpha_i$. But also note that $$\left\lfloor{\alpha_1}\right\rfloor v_1 +...+ \left\lfloor{\alpha_m}\right\rfloor v_m \in \mathcal{L}.$$ Therefore we know that $$v' := (\alpha_1 - \left\lfloor{\alpha_1}\right\rfloor)v_1 +...+(\alpha_m - \left\lfloor{\alpha_m}\right\rfloor) v_m \in \mathcal{L}.$$ But $v'$ is also clearly in $P(v_1,...,v_m)$. So by assumption we have that $v' = 0$. Since the $v_i$ are linearly independent, it must be that $\alpha_i - \left\lfloor{\alpha_i}\right\rfloor = 0$ which means $\alpha_i$ is an integer for all $i$. This implies that the $v_i$ form a lattice basis for $\mathcal{L}$ as claimed. 
\end{proof}
\begin{definition} A subset $\mathcal{L}$ of $\mathbb R^m$ is called a \textit{discrete additive subgroup} if it is an additive subgroup of $\mathbb R^m$ and there exists an $\epsilon > 0$ such that for any vector $v \in \mathcal{L}$, the intersection of sets $$B_{\epsilon}(v) \cap \mathcal{L} = v.$$ 
\end{definition}

\begin{theorem}
A subset $\mathcal{L}$ of $\mathbb R^m$ is a lattice if and only if it is a discrete subgroup of $\mathbb R^m$. 
\end{theorem}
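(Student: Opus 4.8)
The plan is to prove the two implications separately; the forward direction is short and the converse is the real content. For \emph{lattice $\Rightarrow$ discrete subgroup}: if $\mathcal{L}$ has basis $b_1,\dots,b_n$ then it is closed under addition and negation, hence an additive subgroup of $\mathbb{R}^m$. For discreteness I would set $\epsilon := \min_{1\le i\le n}\|\widetilde{b_i}\|$, which is positive and, by Theorem 3, satisfies $\epsilon \le \lambda_1(\mathcal{L})$. Then for any $v \in \mathcal{L}$ and any $w \in B_\epsilon(v)\cap\mathcal{L}$, the vector $w-v$ lies in $\mathcal{L}$ with $\|w-v\| < \epsilon \le \lambda_1(\mathcal{L})$, forcing $w-v=0$; hence $B_\epsilon(v)\cap\mathcal{L}=\{v\}$.

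For the converse, let $\mathcal{L}$ be a discrete additive subgroup with constant $\epsilon$. First I would record the finiteness lemma that drives the argument: every bounded subset of $\mathbb{R}^m$ meets $\mathcal{L}$ in only finitely many points. Indeed, cover the bounded set by finitely many open balls of radius $\epsilon/3$; two points of $\mathcal{L}$ in one such ball are at distance $<\epsilon$ and hence equal, so each ball contains at most one point of $\mathcal{L}$. Consequently any infimum of a coordinate or of a length taken over a bounded family of lattice vectors is attained. I would then induct on $k := \dim\operatorname{span}_{\mathbb{R}}(\mathcal{L})$, the case $k=0$ being trivial.

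For the inductive step, pick linearly independent $v_1,\dots,v_k\in\mathcal{L}$ spanning $\operatorname{span}(\mathcal{L})$ and set $W := \operatorname{span}(v_1,\dots,v_{k-1})$. The subgroup $\mathcal{L}\cap W$ is again discrete and has span exactly $W$ (it contains $v_1,\dots,v_{k-1}$), so by induction it has a basis $b_1,\dots,b_{k-1}$, which is simultaneously a vector-space basis of $W$. Let $S$ be the (bounded, hence finite) set of $x\in\mathcal{L}$ whose unique expansion $x=\alpha_1 b_1+\dots+\alpha_{k-1}b_{k-1}+\alpha_k v_k$ has $0\le\alpha_i\le 1$ for $i<k$ and $0<\alpha_k\le 1$; it is nonempty since $v_k\in S$, so choose $b_k\in S$ minimizing $\alpha_k$. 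I claim $b_1,\dots,b_k$ is a basis of $\mathcal{L}$: they are linearly independent since $b_k\notin W$, and given any $x\in\mathcal{L}$ one first subtracts the integer multiple of $b_k$ that removes the integer part of the ratio of the $v_k$-coordinate of $x$ to that of $b_k$, and then subtracts integer multiples of $b_1,\dots,b_{k-1}$ to bring those coordinates into $[0,1)$; the resulting $y\in\mathcal{L}$ has $v_k$-coordinate in $[0,\alpha_k(b_k))$, so by minimality of $\alpha_k(b_k)$ this coordinate must be $0$, i.e. $y\in\mathcal{L}\cap W$ is an integer combination of $b_1,\dots,b_{k-1}$; unwinding then expresses $x$ as an integer combination of $b_1,\dots,b_k$. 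This is in the same spirit as the fundamental-parallelepiped criterion of Theorem 4.

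The step I expect to be the main obstacle is the coordinate bookkeeping in that last claim: one must check that whenever the $v_k$-coordinate of $y$ is positive, $y$ actually belongs to $S$ and satisfies $\alpha_k(y)<\alpha_k(b_k)$, which is exactly what produces the contradiction. This uses that $b_1,\dots,b_{k-1}\in W$, so the second round of subtractions leaves the $v_k$-coordinate untouched, and that reducing the $v_k$-coordinate modulo $\alpha_k(b_k)$ places it in $[0,\alpha_k(b_k))\subseteq[0,1)$. Discreteness enters essentially through the finiteness lemma: without it the minimizing $b_k$ need not exist, as one sees from the dense, non-discrete subgroup $\mathbb{Z}+\sqrt{2}\,\mathbb{Z}\subset\mathbb{R}$.
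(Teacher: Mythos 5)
Your proof is correct and follows essentially the same route as the paper: both construct the basis step by step by choosing, from the finitely many lattice points in a bounded fundamental region, a vector minimizing its component transverse to the span of the previously chosen vectors (your coefficient $\alpha_k$ is proportional to the paper's distance to $\operatorname{span}(v_1,\dots,v_i)$), and both verify integrality by subtracting floors and appealing to that minimality. The only substantive difference is that you make explicit the covering argument showing a bounded set meets $\mathcal{L}$ in finitely many points, which the paper asserts without proof.
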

\begin{proof}
First let $\mathcal{L}$ be a lattice in $\mathbb R^m$. We will show it is a discrete additive subgroup= of $\mathbb R^m.$ Clearly $\mathcal{L}$ is an additive subgroup of $\mathbb R^m$. So for all $x,y \in \mathcal{L}$, we know $x - y \in \mathcal{L}$ and therefore $$||x - y|| \geq \lambda_1(\mathcal{L})$$ where $\lambda_1(\mathcal{L})$ is as defined above. But we know from the previous theorem that $$||x - y|| \geq \lambda_1(\mathcal{L}) \geq \min_{i = 1,...,m}||\widetilde{b_i}|| > 0$$ where $B = \{b_1,...,b_m\}$ is a lattice basis and $\widetilde{B} = \{\widetilde{b_1},...,\widetilde{b_m}\}$ is the Gram Schmidt orthogonalization of $B$. Setting $\epsilon = \lambda_1(\mathcal{L})$, we get the first direction.
\\
\\
Now for the other direction. We will construct a lattice basis  $(v_1,...,v_n) \subset \mathbb R^m$ (with $n \leq m$) for our discrete subgroup $\mathcal{L}$. First pick $y \in \mathcal{L}$ such that there is no vector in $\mathcal{L}$ shorter than $y$. Set $y$ equal to $v_1$. Next, we recursively define $v_i$. Assume $v_1,...,v_i$ have been chosen. Choose a vector $y$ in $\mathcal{L}$ which is not in the span of the previous $v_i$. Next, consider the parallelepiped $$P(v_1,...,v_i,y)$$ and note that the parallelepiped contains at least one (but finitely many) points in the $\mathcal{L}$. Choose a vector $$z \in P(v_1,..,v_i) \setminus span(v_1,...,v_i)$$ where we require $z$ to be in $\mathcal{L}$ and $$dist(z,span(v_1,...,v_i))$$ is minimized. Here our notion of distance is given by the distance from $z$ to its orthogonal projection in the subspace $span(v_1,...,v_i)$. Since there are finitely many $z$, such a choice can be made. Set $z = v_{i+1}$. Continue until we cannot construct anymore $v_i$. Note that we can choose at most $m$ $v_i$ since we cannot take more than $m$ linearly independent vectors in $\mathbb R^m$.
\\
\\
Now that we have picked our set $B = (v_1,...,v_n)$, we will show that $\mathcal{L}$ is precisely the $\mathbb Z$ span of $B$. Clearly anything in the $\mathbb Z$ span of $B$ is contained in $\mathcal{L}$. Next, let $v$ be any vector in $\mathcal{L}$ Then by construction of $B$, we know that we can write $$v = \sum_{i = 1}^{n}y_iv_i$$ for $y_i \in \mathbb R$. Take $$v':= \left\lfloor{y_1}\right\rfloor v_1 +...+\left\lfloor{y_n}\right\rfloor v_n.$$ Then $v'$ is clearly an element in $\mathcal{L}$ and so we know $v - v' \in \mathcal{L}$. Consider the Gram-Schmidt orthogonalization of the set $\{v_1,...,v_n\}$ and denote it by $\{\widetilde{v_1},...,\widetilde{v_n}\}$. Then since we can write $$v - v' = (y_n - \left\lfloor{y_n}\right\rfloor)\widetilde{v_n} + s$$ where $s \in span(v_1,...,v_{n-1})$, we have that $$dist(v - v', span(v_1,...,v_{n-1})) = (y_n - \left\lfloor{y_n}\right\rfloor)\cdot ||\widetilde{v_n}||.$$ By the same logic, $$dist(v_n, span(v_1,...,v_{n-1})) = ||\widetilde{v_n}||.$$ But since $(y_n - \left\lfloor{y_n}\right\rfloor) < 1$, we know $$dist(v - v', span(v_1,...,v_{n-1})) < dist(v_m, span(v_1,...,v_{n-1})).$$ But $v_n$ was chosen such that it was closest to $span(v_1,...,v_{n-1})$ so $v - v' \in span(v_1,...,v_{n-1})$. This implies that $y_n - \left\lfloor{y_n}\right\rfloor = 0$. So $y_n$ is an integer. 
\\
\\
Repeating the same argument for each $y_i$ using $\{v_1,...,v_i\}$ tells us that each $y_i$ is an integer. So $\mathcal{L}$ is precisely the $\mathbb Z$-span of $B$ as claimed.
\end{proof}
\subsection{Hard Lattice Problems}
Take a lattice $\mathcal{L} \subset \mathbb R^m$. There are a number of problems which we are believed to be hard for computers to solve efficiently. These problems will be the basis of the homomorphic encryption schemes we will study and are also the basis of many post-quantum encryption schemes.
\\
\\
\textbf{Shortest Vector Problem (SVP):}
In our lattice $\mathcal{L}$ and given a norm $||\cdot||$ (usually we take the Euclidean norm in $\mathbb R^m$), find a vector $v \in L$ such that $||v||$ is shorter than all other vectors in $L$ (note that such a $v$ may not be unique). 
\\
\\
\textbf{Closest Vector Problem (CVP):}
Given a vector $w \in \mathbb R^m$ that is not in $\mathcal{L}$ find a vector $v \in \mathcal{L}$ such that $||w - v||$ is minimized.
\\
\\
CVP is NP hard and SVP is NP hard under a "randomized reduction hypothesis". Here is another problem believed to be reasonably hard for computers to efficiently solve.
\\
\\
\textbf{GapSVP:} Let $\mathcal{L}$ be an $n$ dimensional lattice and let $\psi$ be a real valued function which takes integers as input. Given a lattice basis for $\mathcal{L}$, decide whether $\lambda_1(\mathcal{L}) \leq 1$ or if $\lambda_1(\mathcal{L}) \geq \psi(n)$.  
\\
\\
The security of all the homomorphic encryption schemes we will consider is based on the GapSVP problem. In order to explain why, we need to explain a new problem. 
\section{LWE and RLWE}
Now we will give the \textit{Learning With Errors Problem} and its variants. Intuitively, the LWE problems ask you to solve a "noisy" system of linear equations. 
\\
\\
\textbf{Notation:} For a probability distribution $\chi$, we write $e \leftarrow \chi$ to mean that $e$ is sampled from the distribution $\chi$. If $S$ is a set, we write $s \leftarrow S$ to mean $s$ is sampled from $S$ according to the uniform distribution over $S$.
\\
\\
\subsection{LWE}
Let $q$ be a prime, $\chi$ a probability distribution which outputs "small values" (e.g. the uniform distribution over $[-B,B]$ for $B << \frac{q}{2}$). We say that a sample from $LWE(n,\chi,q)$ is a set of pairs of the form $$\left\{(a_i, \langle{a_i,s\rangle} + e_i) \mid s \leftarrow \mathbb Z_q^n, e_i \leftarrow \chi, a_i \leftarrow \mathbb Z_q^{n}\right\}.$$
\\
\\
The \textit{decision} version of $LWE(n, m, \chi, q)$ is to distinguish $m$ samples from $LWE(n,\chi,q)$ from $m$ uniformly random samples in $\mathbb Z_q^n \times \mathbb Z_q$. The search version of $LWE(n,\chi,q)$ is to recover $s$ given a polynomially bounded amount (in $n\log q)$ of $LWE(n, \chi, q)$ samples. It is known that these two variants are actually equivalent.
\\
\\
In his 2009 paper \textit{On Lattices, Learning With Errors, Random Linear Codes, and Cryptography}, Oded Regev gives the following theorem: 
\\
\\
\begin{theorem} Let $n, p$ be integers and $\alpha \in \{0,1\}$ be such that $\alpha p > 2\sqrt{n}$. If there exists an efficient algorithm that solves $LWE_{p \Psi, \alpha}$, then there exists an efficient quantum algorithm that approximates GapSVP and SIVP (another lattice problem). 
\end{theorem}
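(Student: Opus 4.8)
The plan is to reproduce the structure of Regev's argument, whose core is an \emph{iterative reduction} that uses the hypothesized solver for $\mathrm{LWE}$ (with modulus $p$ and Gaussian noise of width $\alpha$) to progressively sharpen samples from discrete Gaussian distributions on the target lattice. Fix the lattice $L \subset \mathbb{R}^n$ on which we want to approximate $\mathrm{GapSVP}$ and $\mathrm{SIVP}$, write $D_{L,r}$ for the discrete Gaussian on $L$ of width $r$, and $L^*$ for the dual lattice. I would first isolate the two ``endpoints'' that frame everything: (i) for a sufficiently large $r_0$ one can sample from $D_{L,r_0}$ directly from an LLL-reduced basis of $L$, using no oracle at all; and (ii) once one can sample from $D_{L,r}$ for $r$ within a polynomial factor of the smoothing parameter $\eta_\epsilon(L)$, standard transference and averaging arguments read off $\lambda_1(L)$ and a short independent set, solving $\mathrm{GapSVP}$ and $\mathrm{SIVP}$ to within the stated factor. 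Everything else is a loop that, given $\mathrm{poly}(n)$ samples from $D_{L,r}$, produces the same number of samples from $D_{L,r'}$ with $r' = c\,\sqrt{n}\,r/(\alpha p)$ for an absolute constant $c$; the hypothesis $\alpha p > 2\sqrt{n}$ is exactly what forces $r' < r$, so $\mathrm{poly}(n)$ iterations drive $r$ down to the smoothing regime.

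Each pass through the loop splits into a classical half and a quantum half. \textbf{Classical half ($D_{L,r}$ samples $+$ $\mathrm{LWE}$ solver $\Rightarrow$ $\mathrm{BDD}$ on $L^*$).} Given samples $v_1,\dots,v_m \leftarrow D_{L,r}$ and a target $t = u + e$ with $u \in L^*$ and $\|e\|$ at most roughly $\alpha p/(\sqrt{2}\,r)$, observe that $\langle v_i, t\rangle = \langle v_i, u\rangle + \langle v_i, e\rangle$ with $\langle v_i, u\rangle \in \mathbb{Z}$, so $\langle v_i, t\rangle \bmod 1$ is a small near-Gaussian quantity. Fixing a basis, reduce $v_i$ to a vector $a_i \in \mathbb{Z}_p^n$ and set $b_i := \langle v_i, t\rangle \bmod 1$; one checks that $b_i \approx \tfrac1p\langle a_i, s\rangle + (\text{Gaussian noise of width} \approx \alpha)$, where the secret $s$ encodes the coordinates of $u$ modulo $p$. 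Feeding the $(a_i,b_i)$ to the $\mathrm{LWE}$ solver returns $s$, hence $u \bmod pL^*$; combining with the Gaussian samples (or recursing on the scaled lattice) then pins down $u$ exactly, i.e.\ solves $\mathrm{BDD}$ on $L^*$ at distance $\asymp \alpha p/r$.

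\textbf{Quantum half ($\mathrm{BDD}$ solver on $L^*$ $\Rightarrow$ narrower samples on $L$).} This is the sole quantum ingredient. Prepare a fine-grid Gaussian superposition $\sum_x \rho_{1/(\sqrt 2\,r')}(x)\,|x\rangle$ over points near $\mathbb{R}^n$, call the $\mathrm{BDD}$ solver coherently to compute, for each $x$, the dual-lattice vector nearest to it, record the offset in a second register and uncompute, leaving (approximately) a Gaussian-weighted superposition over the cosets of $L^*$. Applying the quantum Fourier transform and measuring yields, via Poisson summation, a sample distributed as $D_{L,r'}$ with $r' \asymp \sqrt{n}\,r/(\alpha p)$, using that the dual of $L^*$ is $L$. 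Three error terms must then be controlled — truncating the Gaussian to a box, discretizing $\mathbb{R}^n$ to a grid, and the event that the $\mathrm{BDD}$ solver errs on part of the superposition — each bounded by Gaussian tail estimates together with the fact that $r$ stays above $\eta_\epsilon$ throughout, keeping the deviation negligible in statistical distance.

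The main obstacle is the quantum half: proving rigorously that the QFT of the prepared state is statistically close to $D_{L,r'}$. This requires (a) preparing the Gaussian superposition to negligible error on an explicit finite grid, (b) a careful Poisson-summation / smoothing-parameter computation showing the Fourier transform concentrates on the dual of $L^*$, namely $L$, with precisely the right width, and (c) a hybrid argument that absorbs the $\mathrm{BDD}$ solver's failure probability — and the fact that the classical half is only an \emph{average-case}-to-\emph{worst-case} reduction and so itself succeeds only with noticeable, not overwhelming, probability — without the error compounding over the $\mathrm{poly}(n)$ iterations of the loop. Tracking the constants so that the per-step width loss is exactly $\asymp \sqrt{n}/(\alpha p)$, and hence the hypothesis $\alpha p > 2\sqrt{n}$ already suffices for termination, is the place where the bookkeeping is most delicate.
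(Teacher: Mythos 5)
The paper does not prove this theorem: it is Regev's worst-case-to-average-case reduction, quoted from his 2005/2009 paper as a black box (and note the statement as printed contains a typo --- $\alpha$ should range over the open interval $(0,1)$, not the two-element set $\{0,1\}$, and the subscript should read $\mathrm{LWE}_{p,\Psi_\alpha}$). So there is no in-paper argument to compare yours against; the relevant benchmark is Regev's original proof, and your outline reconstructs its architecture faithfully: the LLL-initialized iteration on discrete Gaussian width, the classical step turning $D_{L,r}$ samples plus the LWE oracle into a BDD solver on $L^*$ at distance $\asymp \alpha p/r$, the quantum step turning that BDD solver into samples from $D_{L,r\sqrt{n}/(\alpha p)}$ via a coherent CVP computation, uncomputation, and QFT, and the role of $\alpha p > 2\sqrt{n}$ in making the width strictly decrease. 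The inner-product observation $\langle v_i,t\rangle \bmod 1 = \langle v_i,e\rangle \bmod 1$ and the identification of the LWE secret with the coordinates of $u$ modulo $p$ are also correct.

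That said, what you have written is a plan, not a proof. Every step whose verification is genuinely hard is named and then deferred: the statistical closeness of the post-QFT state to $D_{L,r'}$ (your items (a)--(c)), the Poisson-summation/smoothing-parameter computation that fixes the width $r'$, the handling of the unknown noise parameter and the amplification of the classical step's noticeable success probability into something that survives $\mathrm{poly}(n)$ iterations, and the two endpoints (that LLL really yields $D_{L,r_0}$ samples for suitable $r_0$, and that near-smoothing Gaussian samples really solve GapSVP and SIVP within the claimed $\tilde O(n/\alpha)$ factor). Each of these occupies several pages in Regev's paper and none is routine; in particular the claim that the reduction tolerates a BDD oracle that errs on part of the superposition requires a specific argument about why measuring the offset register does not destroy coherence. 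For a survey, the honest options are either to cite the theorem without proof (as the paper does) or to present one of these components in full; an outline that gestures at all of them proves none of them.
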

That is, if solving GapSVP is hard (for quantum computers - which we believe they are) then solving $LWE$ (both the decision and search versions) is also hard (for classical computers).
\subsection{RLWE}
Fix a prime $q$ and take the ring $$R_{n,q} = \mathbb Z_q[x]/(x^n + 1).$$ An RLWE sample is of the form $(a(x), b(x))$ where $a(x) \in R_{n,q}$ is uniform random and $b(x) = s(x)a(x) + e(x) \in R_{n,q}$. Here $s$ is a fixed secret polynomial and each coefficient $e$ comes from the error distribution $\chi$ we mentioned above. The (search)RLWE problem is to find the polynomial $s(x)$ given a bunch of $RLWE$ samples. The decision RLWE problem is defined analogously as above. We have the following theorem:
\\
\\
\begin{theorem} Suppose that it is hard for polynomial-time quantum algorithms to approximate the search version of the shortest vector problem (SVP) in the worst case on ideal lattices in $R = \mathbb Z[x]/(x^n + 1)$ to within a fixed poly(n) factor. Then any $poly(n)$ number of samples drawn from the R-LWE distribution are pseudorandom to any polynomial-time (possibly quantum) attacker.
\end{theorem}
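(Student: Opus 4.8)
The plan is to chain together several reductions so that an efficient (possibly quantum) attacker against RLWE pseudorandomness yields an efficient quantum algorithm for approximate SVP on ideal lattices in $R = \mathbb{Z}[x]/(x^n+1)$; this contrapositive is what we must establish. The backbone is the pipeline: worst-case approximate ideal-SVP reduces to a discrete Gaussian sampling problem on ideal lattices, which reduces (quantumly) to search-RLWE, which reduces to decision-RLWE, which in turn yields pseudorandomness of polynomially many samples. I would build this link by link, working throughout with the canonical embedding of $R$ into $\mathbb{C}^n$ so that the algebraic structure of the ring is preserved at every step.

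For the first link I would follow Regev's iterative strategy: to sample from a discrete Gaussian $D_{\mathcal{L},r}$ on an arbitrary ideal lattice $\mathcal{L} \subset R$ with a small width $r$, start from very wide Gaussian samples (which are easy to produce) and repeatedly halve the width, each halving step using a bounded-distance decoding oracle for $\mathcal{L}$ together with the quantum Fourier transform over $\mathbb{R}^n/\mathcal{L}^\vee$ to turn decoded points into narrower Gaussian samples; sufficiently narrow Gaussian samples then reveal short lattice vectors. The second link is the quantum reduction of this BDD/sampling problem to \emph{search}-RLWE: from discrete Gaussian samples over the dual of an ideal lattice one assembles RLWE-type samples $(a, a s + e)$ whose secret $s$ encodes a BDD target, and a search-RLWE solver recovers it. The main subtlety in both links is bookkeeping on the error distribution — the Gaussian produced by the embedding is an elliptical Gaussian in $\mathbb{C}^n$ — so the RLWE problem should be stated with a broad enough family of Gaussian errors for the reduction to land inside it.

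The step I expect to be the main obstacle is the reduction from \emph{search}-RLWE to \emph{decision}-RLWE, since the standard LWE trick of guessing one coordinate of the secret modulo $q$ has no immediate analogue when $s$ lives in a ring rather than in $\mathbb{Z}_q^n$. The resolution is to exploit the algebra of the cyclotomic ring: choosing $q \equiv 1 \pmod{2n}$ makes $x^n+1$ split completely modulo $q$, so $R_q \cong \prod_i R/\mathfrak{q}_i$ by the Chinese Remainder Theorem and the Galois automorphisms of $R$ permute the prime ideals $\mathfrak{q}_i$ transitively. I would then (i) use a decision-RLWE distinguisher, together with the self-reducibility of RLWE (re-randomizing the secret by adding a known ring element), to recover $s \bmod \mathfrak{q}_1$ by searching over the $q$ residues of $R/\mathfrak{q}_1 \cong \mathbb{F}_q$; (ii) apply each automorphism to transfer this to $s \bmod \mathfrak{q}_i$ for all $i$; and (iii) reassemble $s$ by CRT. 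The delicate points will be checking that an automorphism carries RLWE samples to RLWE samples with an error still in the admissible family (which is why one wants the error Gaussian invariant under the coordinate permutations of the canonical embedding) and controlling the accumulation of error probability across the hybrids over the $\mathfrak{q}_i$.

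Finally, decision-RLWE gives pseudorandomness of any $\mathrm{poly}(n)$ samples by a routine hybrid argument: replace the samples one at a time by uniform pairs, so that an attacker distinguishing the all-RLWE distribution from the all-uniform one with non-negligible advantage produces a single-sample distinguisher and contradicts decision-RLWE. Composing the whole chain and tracking the approximation factor lost at the lattice end then completes the argument.
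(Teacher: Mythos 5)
The paper does not actually prove this theorem; it is quoted verbatim (as a black box) from Lyubashevsky, Peikert, and Regev's \emph{On Ideal Lattices and Learning With Errors Over Rings} (reference 6), and the text moves on immediately after stating it. So there is no in-paper argument to compare yours against. Measured against the actual LPR proof, your sketch reconstructs the architecture correctly: the quantum iterative reduction in Regev's style (wide Gaussian samples, a BDD oracle realized via an RLWE solver, and a quantum Fourier transform step to narrow the Gaussian), the search-to-decision reduction that exploits $q \equiv 1 \pmod{2n}$ so that $x^n+1$ splits completely, the transitive action of the Galois group on the prime ideals above $q$ to propagate knowledge of $s \bmod \mathfrak{q}_1$ to all $\mathfrak{q}_i$, and the final hybrid argument for polynomially many samples. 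You also correctly flag the two standard trouble spots: that the errors must be taken from a family of elliptical Gaussians in the canonical embedding for the lattice reduction to land inside the RLWE definition, and that automorphisms must preserve that family.

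Two caveats on what the sketch still hides, in case you intend to expand it. First, the guess-and-check recovery of $s \bmod \mathfrak{q}_i$ needs a distinguisher that works on the \emph{specific} hybrid distributions you construct, not merely one with non-negligible advantage on average over the secret; LPR handle this with an amplification/worst-case-to-average-case step for the decision problem (and the accumulation of advantage across the $n$ hybrids over the $\mathfrak{q}_i$ costs a factor you must budget for). Second, the conclusion of the lattice end of the reduction is naturally a statement about the discrete Gaussian sampling problem and SIVP on ideal lattices; translating that into the approximate SVP statement in the theorem uses the special geometry of ideal lattices (all successive minima are within a $\sqrt{n}$-type factor of each other), which is worth a sentence. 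Neither point is a gap in the plan, but both are places where the one-line description conceals a real lemma.
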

So the same conclusion we made for LWE also applies for RLWE!
\\
\\
\section{Example: A Public Key Encryption Scheme Based on LWE}
We will now give an example of a (toy) encryption scheme whose security is based on the decision LWE hardness assumption. For this, we first need to give some notions of security and public key encryption.
\\
\\
\begin{definition}
A \textit{public key encryption scheme} is a tuple of algorithms $(Gen, Enc, Dec)$. $Gen$ takes as input a security parameter denoted by $1^n$ for $n$ bit security. $Gen(1^n)$ outputs a pair $(pk, sk)$. For a message $m$ in message space $\mathcal{M}$ (this can be some ambient field $\mathbb Z_p$ for example), $Enc(pk,m)$ outputs a ciphertext $c$, where $c$ is an element of the ciphertext space $\mathcal{C}$ specified by the algorithm. Finally $Dec$ takes in $sk$ and $c$ as inputs and is such that $Dec(sk, c) = m$. 
\end{definition}
Now we need to give notions which will help us define the \textit{security} of a public key encryption scheme.
\\
\\
\begin{definition}
Let $\{X_n\}$ be a sequence of random variables whose support is bitstrings of length polynomial in $n$. We call the sequence $\{X_n\}$ a \textit{probability ensemble}. Now consider two probaility ensembles $\{X_n\}$ and $\{Y_n\}$. For any probabilistic polynomial time algorithm $A$, if we have $$\bigg{|}Pr[s \leftarrow X_n: A(s) = 1] - Pr[s \leftarrow Y_n: A(s) = 1]\bigg{|} \leq \epsilon(n)$$ where $\epsilon(n)$ is a negligible function in $n$, then we say the two distribution ensembles are $\textit{computationally indistinguishable}$. We write $\{X_n\} \approx_c \{Y_n\}$. (Note that formally, the decision LWE problem is naturally stated in terms of computational indistinguishability). 
\end{definition}
The notion of computational indistinguishability is critical to defining several notions for security in cryptosystems. It intuititively tells us that no computationally bounded attacker can tell the difference between two distributions given several samples. In the setting of an encryption scheme, this is a nice property for the distributions of your ciphertexts to have. Namely, we want an encryption scheme to have the property that, a computationally bounded attacker should not be able to tell two different ciphertexts $c_1$ encrypting $m_1$ and $c_2$ encrypting $m_2$ come from different distributions (ie the distribution of ciphertexts encrypting $m_1$ and the distribution of ciphertexts encrypting $m_2$). That is, given ciphertexts, a computationally bounded attacker should learn nothing from them. 
\\
\\
\begin{definition} Consider an encryption scheme specified by the algorithms (Gen, Enc, Dec). The scheme (Gen, Enc, Dec) is said to be \textit{multi message secure} if for all probabilistic polynomial time algorithms $A$ and all polynomials $q(n)$, there exists a negligible function $\epsilon(\cdot)$ such that for all $n \in \mathbb N$ and $m_0,m_1,...,m_{q(n)}, m_0',m_1',...,m_{q(n)}' \in \{0,1\}^n$, $A$ distinguishes between the following distributions with probability at most $\epsilon(n):$ $$\biggr{(}k \leftarrow Gen(1^n): Enc(k, m_i)\biggr{)}_{i = 1}^{q(n)}$$ and $$\biggr{(}k \leftarrow Gen(1^n): Enc(k, m_i')\biggr{)}_{i = 1}^{q(n)}$$  
\end{definition}
Next, we will need the following fact in the security proof of our scheme. This theorem is a special case of the hybrid lemma. It says that computational indistinguishability is a transitive relation between distribution ensembles.
\begin{theorem}
Let $\{X_n\}$, $\{Y_n\}$, and $\{Z_n\}$ be distribution ensembles. If $\{X_n\} \approx_c \{Y_n\}$ and $\{Y_n\} \approx_c \{Z_n\}$, then $\{X_n\} \approx_c \{Z_n\}$.
\end{theorem}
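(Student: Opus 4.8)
The plan is to fix an arbitrary probabilistic polynomial-time distinguisher $A$ and reduce the claim to the triangle inequality for the absolute value on $\mathbb R$, together with the elementary fact that the sum of two negligible functions is negligible. First I would unpack the hypotheses against this single fixed $A$: since $\{X_n\} \approx_c \{Y_n\}$, there is a negligible function $\epsilon_1(\cdot)$ with $|Pr[s \leftarrow X_n : A(s) = 1] - Pr[s \leftarrow Y_n : A(s) = 1]| \leq \epsilon_1(n)$, and since $\{Y_n\} \approx_c \{Z_n\}$ there is a negligible $\epsilon_2(\cdot)$ bounding the gap between $Y_n$ and $Z_n$ against the \emph{same} $A$.

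Next I would introduce the shorthand $p_X = Pr[s \leftarrow X_n : A(s) = 1]$, and analogously $p_Y$ and $p_Z$, and then apply the triangle inequality: $|p_X - p_Z| \leq |p_X - p_Y| + |p_Y - p_Z| \leq \epsilon_1(n) + \epsilon_2(n)$. It then remains to check that $\epsilon(n) := \epsilon_1(n) + \epsilon_2(n)$ is itself negligible. For this I would take an arbitrary polynomial $q(n)$; by negligibility of $\epsilon_1$ and $\epsilon_2$, for all sufficiently large $n$ we have $\epsilon_1(n) < \frac{1}{2q(n)}$ and $\epsilon_2(n) < \frac{1}{2q(n)}$, hence $\epsilon(n) < \frac{1}{q(n)}$ eventually. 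Since $q$ was arbitrary, $\epsilon$ is negligible, and since $A$ was an arbitrary PPT algorithm, we conclude $\{X_n\} \approx_c \{Z_n\}$.

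The only point that requires any care — and it is the closest thing to an obstacle — is that the definition of computational indistinguishability must be applied with the \emph{same} distinguisher $A$ in both hypotheses; this is exactly what makes the two error bounds addable. Beyond this bit of bookkeeping there is no genuine difficulty: this statement is precisely the two-term base case of the general hybrid argument, in which a chain of $\mathrm{poly}(n)$ ensembles is handled by summing $\mathrm{poly}(n)$ negligible terms and observing that a polynomially long sum of negligible functions is still negligible.
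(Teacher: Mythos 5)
Your proof is correct, and it is the standard argument: fix one PPT distinguisher $A$, instantiate both hypotheses against that same $A$, apply the triangle inequality to $|p_X - p_Z|$, and verify that the sum of two negligible functions is negligible. The paper itself states this theorem without proof (merely remarking that it is a special case of the hybrid lemma), so there is nothing to compare against; your write-up supplies exactly the missing argument, and your emphasis on using the \emph{same} $A$ in both hypotheses is the right point to flag, since that is what makes the two error bounds combine.
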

\subsection{Scheme Description and Security Proof}
This scheme only encrypts single bits. Namely, our message space is simply $\{0,1\}$. Let $q$ be a prime, $n,m \in \mathbb Z$, and $\chi$ be a noise distribution where for any $e \leftarrow \chi$ we have that $||e|| \leq q/4m$ with high probability. 
\\
\\
\textbf{Gen($1^n$)}: This algorithm takes in a security parameter $n$ and samples $A \leftarrow \mathbb Z_q^{n \times m}, e \leftarrow \chi^m$ and outputs $(pk, sk)$ where $$sk = s \leftarrow \mathbb Z_q^n, pk = (A, s^TA + e^T).$$ 
\\
\textbf{Enc$(pk, \mu)$}: Denote our $pk$ by $(A, b^T)$ and message $\mu \in \{0,1\}$. We begin by sampling a random bitstring $r \leftarrow \{0,1\}^m$. Output $(Ar, b^T + \mu\cdot \left\lceil{q/2}\right\rceil))$
\\
\\
\textbf{Dec$(sk, (u,v))$}: For secret key $sk$ and ciphertext $(u,v)$ (recall $u$ is a vector of length $n$ and $v$ is a constant), compute $$||v - s^Tu||.$$ If the above quantity is less that $q/4$, output $0$. Otherwise output $1$.
\subsubsection{Correctness}
Consider our ciphertext $(u,v)$. Note that $v - s^Tu = b^Tr + \mu\cdot \left\lceil{q/2}\right\rceil - s^TAr = e^Tr + \mu\left\lceil{q/2}\right\rceil$. If $\mu = 1$ then clearly the decryption correctly outputs $1$. If $\mu = 0$ then by our requirement for $\chi$, we have that $||e^Tr|| \leq m \cdot \frac{q}{4m} = q/4$ with high probability. So decryption correctly outputs $0$ with high probability.
\subsubsection{Security}
To obtain multi message security, it is sufficient to show that for any $k = poly(n)$, we have $$\left((pk, Enc(pk, \mu_1),...,(pk, Enc(pk, \mu_k)) \right) \approx_c \left( (pk, Enc(pk, 0),...,(pk, Enc(pk, 0) \right).$$
By the hybrid lemma, we can obtain the above requirement by just showing $$(pk, Enc(pk, 0)) \approx_c (pk, Enc(pk, 1)).$$ We show this with another hybrid argument. To do that, we construct the following distributions:
\\
\\
\textbf{First Distribution:} $(pk, ct)$ where $$pk = (A, b^T) = (A, S^TA + e^T)$$ for $A \leftarrow \mathbb Z_q^{m \times n}, s \leftarrow \mathbb Z_q^n, e \leftarrow \chi^m$ and $ct = Enc(pk,0)$ $$ = (Ar, b^Tr)$$ for $r \leftarrow \{0,1\}^m$.
\\
\\
\textbf{Second Distribution:} $(pk, ct)$ where $pk = (A, b^T)$ for $A \leftarrow \mathbb Z_q^{m \times n}$ and \textit{random} $b \leftarrow \mathbb Z_q^n$ and $ct = Enc(pk,0)$ $$= (Ar, b^Tr)$$ for $r \leftarrow \{0,1\}^m$.
\\
\\
\textbf{Third Distribution:} $(pk, ct)$ where $pk = (A, b^T)$ for $A \leftarrow \mathbb Z_q^{m \times n}$ and \textit{random} $b \leftarrow \mathbb Z_q^n$ and $ct = (u,v) \leftarrow \mathbb Z_q^n \times \mathbb Z_q$. \\
\\
\textbf{Fourth Distribution:} $(pk, ct)$ where where $pk = (A, b^T)$ for $A \leftarrow \mathbb Z_q^{m \times n}$ and $ct = Enc(pk, 1)$ $$= (Ar, b^Tr + \left\lceil{q/2}\right\rceil)$$ for random $r \leftarrow \{0,1\}^m$.
\\
\\
\textbf{Fifth Distribution:} $(pk, ct)$ where $pk = (A, b^T) = (A, s^TA + e^T)$ for $A \leftarrow \mathbb Z_q^{m \times n}, s \leftarrow \mathbb Z_q^n, e \leftarrow \chi^m$ and $ct = Enc(pk, 1)$ $$= (Ar, b^Tr + \left\lceil{q/2}\right\rceil)$$ for random $r \leftarrow \{0,1\}^m$. We want to show the first distribution is computationally indistinguishable from the fifth  distribution. By the LWE assumption, first distribution is computationally indistinguishable from the second. By a special case of the universal hash lemma, the second is indistinguishable from the third and the third is indistinguishable from the fourth. Finally the LWE assumption implies that that the fourth distribution is computationally indistinguishable from the fifth. By the hybrid argument, we get that the first and fifth distributions are computationally indistinguishable, as required.
\section{Motivation, Notations, and Definitions for CKKS}
Now we use RLWE to construct a homomorphic encryption scheme called CKKS. The great thing about CKKS is that you can perform homomorphic computations on \textit{real/complex numbers}. It was previously only possible to do homomorphic computations on \textit{integers}. This brings us one step closer to real world applications like privacy-preserving machine learning. 
\\
\\
Popular homomorphic encryption schemes have a decryption structure of $\langle{c, sk}\rangle = m + et \pmod{q}$ where $c$ is an encryption of $m$. If we homomorphically multiply encryptions of messages $m_1$ and $m_2$, then the ciphertext corresponding to the encryption of the product $c_1 = m_1m_2$ is contained in some of the least significant bits of the homomorphic multiplication.
\\
\\
This means that if we have encryptions of approximations of $m_1$ and $m_2$, the result $c_1'$ of homomorphically multiplying these encrypted approximations will have completely different Most Significant Bits compared to that of $c_1$. Therefore, it is difficult to homomorphically do arithmetic with encrypted approximations of messages.
\newline
\newline
To do approximate arithmetic homomorphically, we want our decryption to have the our message contained in the most significant bits of of the decryption $\langle{c, sk}\rangle$. Therefore, we want our decryption structure to be of the form $\langle{c, sk}\rangle = m + e$ where $e$ is small compared to $m$.

\subsection{Notation and Definitions}
Let $M$ be a power of $2$ larger where $M > 2$, and let $\phi_M(x)$ be the $M$th cyclotomic polynomial with degree $N = \phi(M)$ (note that $N$ is simply $M/2$ and is always even). More explicitly, $$\phi_M(x) = (x - \zeta)...(x - \zeta^j)...(x - \zeta^{M - 1})$$ where where $\zeta = e^{\frac{2 \pi i}{M}}$ and $j$ runs through all the numbers in $Z_M^{\ast}$. 
\\
\\
Take $R$ to be the ring $\mathbb Z[x]/(\phi_M(x))$. For an integer $q$, we denote $R_q$ to be the quotient ring $R/(qR)$.
\\
\\
Define the canonical embedding $\sigma: R \to \mathbb C^N$ by $\sigma(a) = (a(\zeta),...,a(\zeta^j),...,a(\zeta^{M - 1}))$ where $\zeta = e^{\frac{2 \pi i}{M}}$ and $j$ runs through all the numbers in $Z_M^{\ast}$. Note that $\sigma(a) \in \mathbb C^N$ since $|Z_M^{\ast}| = \phi(M) = N$. For a polynomial $a \in S$, define the canonical embedding norm $$||a||_{\infty}^{\text{can}} = ||\sigma(a)||_{\infty}$$ We use the canonical embedding norm later to discuss the sizes of noise polynomials in decryption and homomorphic operations.
\\
\\
Finally take any real number $a \in \mathbb R$. We write $\lfloor{a}\rceil$ to describe $a$ rounded to the nearest integer.
\section{Encoding/Decoding Procedure}
Our message vectors will be complex vectors in $\mathbb C^{\frac{N}{2}}$ and our plaintext space will be $R$. We do this because we can encrypt multiple messages into one ciphertext. Encoding will map our complex messages to a plaintext polynomial in $R$, while decoding map a plaintext polynomial back to the original complex vector.   
\subsection{Encoding}
Consider the canonical embedding from $R \to \mathbb \sigma(R) \subset \mathbb C^N$. Since $M$ is a power of $2$, we have $N = M/2$. First note that the kernel of this homomorphism is the zero polynomial(denoted by $0$). Since $R/(0)$ is just $R$, we have that $$R \cong \sigma(R)$$ by the first isomorphism theorem. Therefore $\sigma$ maps every polynomial in $R$ to a unique vector in $\sigma(R)$. 
\\
\\
For a vector $z = (z_1,...,z_n) \in \mathbb \sigma(R)$, our goal is to compute $\sigma^{-1}(z)$. This problem reduces to finding the coefficient vector $(\alpha_0,...,\alpha_{N - 1})$ such that $$\sum_{i = 0}^{N - 1}\alpha_i(\zeta^{2j - 1})^i = z_j$$ for $j = 1,...,n$. We can characterize this problem as solving the system of equations $$\begin{bmatrix}
1 && \zeta && \zeta^2&&...&& \zeta^{N - 1}\\
1 && \zeta^3 && (\zeta^3)^2&&...&& (\zeta^3)^{N - 1}\\
&& && && . \\
&& && && . \\
&& && && . \\
1 && \zeta^{2N - 1} && (\zeta^{2N - 1})^2 &&...&& (\zeta^{2N - 1})^{N - 1}\\
\end{bmatrix} \begin{bmatrix}
\alpha_0\\
\alpha_1\\
.\\
.\\
.\\
\alpha^{N - 1}
\end{bmatrix} = \begin{bmatrix}
z_1\\
z_2\\
.\\
.\\
.\\
z_n
\end{bmatrix}$$
\\
\\
Call the large matrix on the left hand side $A$ and the vector on the right hand side $z$. We get our polynomial coefficients by computing $A^{-1}z$. This essentially computes $\sigma^{-1}(z)$.
\newline 
\newline
In the context of CKKS, our plaintext ring is $R = \mathbb Z[x]/(x^N + 1)$. Note that therefore the coefficients of polynomials encoding messages must have integer coefficients. 
\\
\\
Now consider the canonical embedding $\sigma: R \to \mathbb C^N$. Since every $N$th root of unity is the complex conjugate of some other $N$th root of unity, we know that the image $$\sigma(R) \subseteq H = \{z \in \mathbb C^N \mid z_j = \overline{z_{-j}}\}.$$ So we can naturally identify every element of $\sigma(R)$ to be in $\mathbb C^{\frac{N}{2}}$. Therefore, our message space will just be vectors in $\mathbb C^{\frac{N}{2}}$.
\\
\\
Now we give the encoding procedure which sends a message in $\mathbb C^{\frac{N}{2}}$ to a plaintext polynomial in $R$. We start with a message $ z \in \mathbb C^{\frac{N}{2}}$. Then we naturally extend this into an element in $\mathbb C^{N}$ by computing $\pi^{-1}(z)$ which is computed by keeping the $\frac{N}{2}$ entries and then adding the $\frac{N}{2}$ complex conjugates of the first $\frac{N}{2}$ entries to make a vector in $\mathbb H \subset \mathbb C^N$. 
\\
\\
Since $\pi^{-1}(z) \in \mathbb H \subset \mathbb C^{N}$, we would like to immediately apply $\sigma^{-1}$ to $\pi^{-1}$ to get our corresponding plaintext polynomial in $R$. However, note that $\sigma(R) \neq H$, so $\pi^{-1}(z)$ may not be an element of $\sigma(R)$. To fix this problem, we use a technique called coordinate-wise random rounding. We now describe this process. 
\subsubsection{Coordinate-Wise-Random-Rounding and Decoding}
We know that $R$ is isomorphic to $\sigma(R)$. Since $R$ has the orthogonal $\mathbb Z$ basis $(1, x,...,x^{N - 1})$, we know that $$(\beta_1,...,\beta_{N}) = (\sigma(1), \sigma(x),...,\sigma(x^{N - 1}))$$ is an orthogonal $\mathbb Z$ basis for $\sigma(R)$. Then for $z \in \mathbb H$, we simply have to project $z$ onto the basis $(\sigma(1), \sigma(x),...,\sigma(x^{N - 1}))$ to find the closest vector to $z$ in $\sigma(R)$. We have $$z = \sum_{i = 1}^{N}z_i\beta_i$$ with $z_i = \frac{\langle{z, \beta_i}\rangle}{||\beta_i||^2}$. Note $\langle{\cdot}\rangle$ denotes the Hermitian inner product and it turns out that this inner product will always yield real values here.
\\
\\
Then once we compute all the $z_i$, we randomly round(i.e use the coordinate wise random rounding scheme) to round $z_i$ to randomly up or down to the nearest integer. Let $l$ be the vector of randomly rounded $z_i$'s. This is an element of $\sigma(R)$. To manage possible rounding errors, we compute $\Delta \cdot \pi^{-1}(l)$ for some scaling factor $\Delta$. Finally, we calculate our $R$ polynomial $m(x) = \sigma^{-1}(\Delta\cdot \pi^{-1}(l))$ using our previously mentioned reasoning. This concludes the encoding scheme.
\\
\\
Finally, decoding is simple. For a polynomial $m \in R$, compute $\pi \circ \sigma(\left \lfloor {\Delta^{-1} \cdot m}\right \rceil)$. 

\subsection{Example}
We now give a toy example of encoding. We will encode $m = (3+4i, 2 - i) \in \mathbb C^{2}$ with $N = 4$, $M = 8$, and $\Delta = 64$. We see that $$\pi^{-1}(3 +4i, 2 - i) = (3 + 4i, 2 - i, 3 - 4i, 2 + i).$$ Multiplying this vector by $\Delta$, we obtain the vector $z = (192 + 256i, 128 - 64i, 192 - 256i, 128 + 64i)$. We now project $z$ onto $\sigma(R)$ to get a close approximation of $z$ in $\sigma(R)$. By our previous discussion, we have that $z \approx \sum_{i = 1}^{4}z_i\beta_i$ where $$z_i = \frac{\langle{z, \beta_i}\rangle}{||\beta_i||^2}$$ and $\beta_i = \sigma(x^i)$. 
\\
\\
We first compute each $\beta_i$. We see $\beta_1 = \sigma(1) = (1,1,1,1)$. Next, $\beta_2 = (\frac{\sqrt{2}}{2} + \frac{i\sqrt{2}}{2},  -\frac{\sqrt{2}}{2} + \frac{i\sqrt{2}}{2},  \frac{\sqrt{2}}{2} - \frac{i\sqrt{2}}{2},  -\frac{\sqrt{2}}{2} - \frac{i\sqrt{2}}{2})$. Then squaring each entry in $\beta_2$, we have that $\beta_3 = (i, -i, -i, i)$. Finally, $\beta_4 = ( -\frac{\sqrt{2}}{2} + \frac{i\sqrt{2}}{2},  \frac{\sqrt{2}}{2} + \frac{i\sqrt{2}}{2},  -\frac{\sqrt{2}}{2} - \frac{i\sqrt{2}}{2},  \frac{\sqrt{2}}{2} - \frac{i\sqrt{2}}{2})$
\\
\\
With this formula, we get $z_1 = 160$, $z_2 = 90.5$, $z_3 = 160$, $z_4 = 45.2$. Now we randomly round each $z_i$ to the nearest integer to get the vector $z' = (160, 90, 160, 45)$. 
\\
\\
Now we compute our projection. This will be $$v = z_1\beta_1 + z_2\beta_2 + z_3\beta_3 + z_4\beta_4 = \begin{bmatrix}
191.82 + 255.46i\\
128.18 - 64.54i\\
191.82 - 255.46i\\
128.18 - 64.54i
\end{bmatrix}$$ Notice how close an approximation this is to $z$. Now we solve the system $$\begin{bmatrix}
1 & \zeta & \zeta^2 & \zeta^3\\
1 & \zeta^3 & (\zeta^3)^2 & (\zeta^3)^3\\
1 & \zeta^5 & (\zeta^5)^2 & (\zeta^5)^3\\
1 & \zeta^7 & (\zeta^7)^2 & (\zeta^7)^3
\end{bmatrix}\begin{bmatrix}
\alpha_0\\
\alpha_1\\
\alpha_2\\
\alpha_3\\
\end{bmatrix} = \begin{bmatrix}
191.82 + 255.46i\\
128.18 - 64.54i\\
191.82 - 255.46i\\
128.18 - 64.54i
\end{bmatrix}$$ We obtain our coefficients $\alpha_0 = 160, \alpha_1 = 90, \alpha_2 = 160, \alpha_3 = 45$ to get our polynomial $m(x) = 160 + 90x + 160x^2 + 45x^3$.

\section{CKKS Scheme Outline}
We first give a brief description of each function in CKKS and what they are supposed to do. We will shortly describe how each of these functions work. The goal is to construct a leveled fully homomorphic scheme which supports approximations as described before. We begin by fixing integers $p > 0$ and modulus $q_0$. For $0 < l \leq L$ write $q_l = p^l\cdot q_0$. 
\\
\\
We choose an integer $M$ as a function of security parameter $\lambda$. For levels $0 \leq l \leq L$, a ciphertext on level $l$ will be an element of $R_{l}^2$.
\\
\\
Encoding and Decoding are as previously described. Now for plaintexts, we want to describe the algorithms \newline$(KeyGen, Enc, Dec, Add, Mult)$. After decrypting, we decode our plaintext and recover the orignal vector in $\mathbb Z[i]^{\frac{N}{2}}$. As before, we work over the plaintext ring $R = \mathbb Z[x]/(\phi_M(x))$ with $M$ a power of $2$.
\\
\\
We now describe the algorithms mentioned before. We have 
\\
\\
$KeyGen(1^{\lambda}):$ Generates secret key $sk$, public key $pk$, and a public evaluation key $evk$ (this is to do homomorphic multiplications).
\newline
\newline
$Enc_{pk}(m): $ For a polynomial $m \in R$, output a ciphertext $c$ in $R_{q_L}^2$ such that $\langle{c, sk}\rangle = m + e \pmod{q_{L}}$ with the noise $e$ small compared to $m$.
\newline
\newline
$Dec_{sk}(c):$ For ciphertext $c$ at level $l$, compute $m' = m + e = \langle{c, sk}\rangle \pmod{q_l}$ where $m$, $e$, and $c$ are as described in $Enc$.
\newline
\newline
$Add(c_1, c_2):$ For ciphertexts $c_1, c_2$ encrypting $m_1$ and $m_2$, output a ciphertext encrypting $m_1 + m_2$ whose noise is bounded by the sum of the noises associated with $c_1$ and $c_2$.
\newline
\newline
$Mult(c_1, c_2):$ For ciphertexts $c_1, c_2$ encrypting $m_1$ and $m_2$, output a ciphertext encrypting $m_1m_2$ whose noise is bounded by a constant $B_{\text{mult}}$.
\newline
\newline
$Rescale_{l \to l'}(c):$ For a ciphertext $c \in R_{q_l}^2$ at level $l > l'$, output $c' = \left \lfloor {\frac{q_{l'}}{q_l}c} \right \rceil$
\newline
\newline
After many homomorphic multiplications, the size of the message in the resulting ciphertext will have grown exponentially. Through $Rescale$, we are able to reduce the size of this message by a factor of $p$. Rescale also reduces the noise of the homomorphic operations. The idea is as follows. Let $z \in \mathbb C^{\frac{N}{2}}$. We consider the number $pz$ and let $m$ be the polynomial encoding of $pz$. Then $\langle{c, sk}\rangle \approx pz$. If we have two ciphertexts $c_1, c_2$ encrypting $m_1, m_2$, homomorphically multiplying them results in a ciphertext $c'$ such that $\langle{c', sk}\rangle \approx p^2z_1z_2$. We use the rescale function to get a ciphertext encrypting $pz_1z_2$.

\section{CKKS Actual Scheme}
We now give the actual CKKS Scheme in full detail. Let $\sigma > 0$. A sample from $DG(\sigma^2)$ is a vector in $\mathbb Z^N$ whose entries are drawn from a discrete Gaussian with variance $\sigma^2$. For a positive integer $h$, a sample from $HWT(h)$ is a vector in $\{0,\pm 1\}^N$ whose hamming weight is $h$. For $0< \rho < 1$ a sample from $ZO(\rho)$ is a vector in $\{0, \pm 1\}^N$ where the probability of having 1 or $-1$ in entry spot $i$ is $\rho /2$ and $0$ being in entry spot $i$ is $1 - \rho$.
\subsection{Description of the Main Algorithms}
Encoding and Decoding are as previously described.
\\
\\
$KeyGen:$ Given security parameter $\lambda$ and $q_L$, generate integers $M, P$ and a real number $\sigma$. Sample $s \leftarrow HWT(h), a \leftarrow R_{q_L}, e \leftarrow DG(\sigma^2)$. We take $sk = (1, s), pk = (b,a)$ with $b = -as + e \pmod{q_L}$. Sample $a' \leftarrow R_{P\cdot q_L}^2, e' \leftarrow DG(\sigma^2)$. Set $evk = (b', a')$ with $b' = -a's + e' + Ps^2$.
\newline
\newline
$Enc_{pk}(m):$ Sample polynomials $ v\leftarrow ZO(0.5), e_0, e_1 \leftarrow DG(\sigma^2)$ whose coefficients form a vector sampled from $ZO(0.5)$ and $DG(\sigma^2)$ respectively. Compute $c = v \cdot pk + (m+e_0, e_1)$ and reduce to a polynomial in $R_{q_L}$.
\newline
\newline
$Dec_{sk}(c):$For $c \in R_{q_l}^2$ Compute $\langle{c, sk}\rangle \pmod{q_l}$.
\newline
\newline
$Add(c_1, c_2):$ For $c_1, c_2 \in R_{q_l}^2$, compute $c_{\text{add}} = c_1 + c_2 \pmod{q_l}$.
\newline
\newline
$Mult(c_1, c_2):$ For $c_1 = (b_1, a_1), c_2 = (b_2, a_2) \ in R_{q_l}^2$, let $(d_0, d_1, d_2) = (b_1b_2, a_1b_2 + a_2b_1, a_1a_2) \pmod{q_l}$. Output $c_{\text{mult}} = (d_0, d_1) + \left \lfloor {P^{-1}\cdot d_2\cdot evk} \right \rceil \pmod{q_l}$
\newline
\newline
$RS_{l \to l'}(c):$ For $c \in R_{q_l}^2$, output $c' = \left \lfloor {\frac{q_{l'}}{q_l}c} \right \rceil \pmod{q_{l'}}$.
\\
\\
Our encrypted messages have a distribution which is (computationally) indistinguishable from that of an RLWE distribution. This is proven formally using a hybrid argument.
\subsection{Bonus! Homomorphic Message Vector Permutation}
Homomorphically permuting the slots of the original plaintext message can be of critical importance in privacy preserving applications. Galois Theory makes this quite easy to do in practice. 
\\
\\
First note that any element in $\sigma(R)$ is also an element of $\mathbb Q(\zeta_M)$. Now recall that any message vector gets mapped to a vector in $\sigma(R)$ by coordinate-wise random rounding. Moreover, recall that $$\text{Gal}(\mathbb Q(\zeta_M)/\mathbb Q) \cong \mathbb Z_M^{\ast}.$$ Now consider $(z_1,...,z_{\frac{N}{2}})$ to be an element of $\sigma(R)$. Then for any $z_i$ and $z_j$, there exists a map $\varphi_{ij} \in \text{Gal}(\mathbb Q(\zeta_m)/\mathbb Q)$ such that $\varphi_{ij}(z_i) = z_j$. To be explicit, we have that $$\varphi_{ij}(\zeta) = \zeta^{(j^{-1}i)}$$ where $j^{-1} \cdot j \equiv 1 \pmod {M}$. 
\\
\\
Now consider the corresponding plaintext polynomial $m(x)$ (i.e the encoding of $(z_1,...,z_{\frac{N}{2}})$. We can analogously act on $m(x)$ to obtain a polynomial $m'(x)$ by setting $$m'(x) = m(x^{j^{-1}i}).$$ Then one can easily verify that the the decodings of $m(x)$ and $m'(x)$ are just permutations of each other. Moreover, the $j$th element of decoding $m'(x)$ is just the $i$th element of decoding $m(x)$. 
\\
\\
Now consider a ciphertext $c = (c_0, c_1)$ encrypting $m$ (on any level). Consider the vector $$(\varphi_{ij}(c_0), \varphi_{ij}(c_1)).$$ One can show that this decrypts to a polynomial which decodes to the same complex vector as what $m'(x)$ decodes to! This is how we plaintext slots on the corresponding ciphertexts.   
\subsection{Example}
\subsubsection{keygen}
We give an example of key generation, encryption, and decryption. We start with key generation. For this example, we take $p = 5, q_0 = 5$, so $q_0 = 5, q_1 = 4\cdot 5 = 20, q_2 = 80, q_3 = 320, q_4 = 1280$. We continue to take $M = 8, N = 4$, and $R = \mathbb Z[x]/(x^4 + 1)$, and $h = 2$.
\\
\\
We first obtain our secret key $sk$. We have $sk = (1,s)$. To get $s$, we sample a vector in $\{0, \pm 1\}^4$ whose hamming weight is $2$. Here, we select $(0,-1, 1, 0)$. Then we have that $s = -x^2 + x$ and $sk = (1, -x^2 + x)$.
\\
\\
We now sample $a$ from $R_{q_L}$. We choose $a = 103x^3 - 15x^2 + 67x - 221$. To get $e$, we sample a vector from $DG(3.2)$. We choose $(0,0,1,1)$ to obtain $e = x + 1$. To create our public key $pk = (b,a)$, we simply compute $b = -as + e \pmod{q_L}$. This turns out to be $$pk = (82x^3 - 288x^2+119x+119,103x^3 - 15x^2 + 67x - 221)$$
\subsubsection{Encryption}
We will encrypt our original message polynomial $m = 160 + 90x + 160x^2 + 45x^3$. We choose $v = x^3 + 1$, $e_0 = x^3 - 1$, $e_1 = x^2 - 1$. We have $$(m + e_0, e_1) = (46x^3 + 160x^2 + 90x + 159, x^2 - 1).$$ We now compute $v \cdot pk \pmod{q_L}$. We have $$(v\cdot b, v\cdot a) = (201x^3 -370x^2 + 407x, -118x^3 - 118x^2 + 82x - 288).$$ Finally we calculate $$c = v\cdot pk + (m + e_0, e_1) = (247x^3-210x^2+497x+159, -118x^3 - 117x^2 + 82x - 289)$$ 
\subsubsection{Decryption}
We now decrypt. This is simply $c[0] + c[1]\cdot s \pmod{q_L}$. We decrypt to get $$48x^3 + 161x^2 + 90x + 160.$$ Notice how close this is to our original plaintext polynomial.

\subsubsection{Decoding}
For fun, let us decode and see how close we get to the original vector we encoded. Recall that we encoded the vector $z = (3 + 4i, 2 - i)$. Above, we encrypted and decrypted the polynomial which encoded this vector. We now decode $p(x) = 48x^3 + 161x^2 + 90x + 160.$ We simply evaluate $p(\zeta)$ and $p(\zeta^3)$. We get $$(189.70 + 258.58i, 130.302 - 63.419i).$$ We now multiply our resultant vector by the inverse of our scaling factor. We compute $\frac{1}{64}(189.70 + 258.58i, 130.302 - 63.419i) = (2.96 + 4.04i, 2.03 - .9909i)$. Finally we round to the nearest Gaussian integer to recover our original complex vector $(3 + 4i, 2 - i)$. 

\subsection{Correctness Lemma}
We give a lemma proving the (approximate) correctness of encryption and decryption. The proofs for the correctness of the rescaling algorithm, and the homomorphic operations are similar.
\begin{theorem}
Let $c$ be an encryption of $m$. Then $\langle{c, sk}\rangle = m + e$ with $||e||_{\infty}^{\text{can}} < B_{\text{clean}} = 8\sqrt{2} + \sigma N + 6\sigma\sqrt{N} + 16\sigma \sqrt{hN}$. Take $c = Enc(m)$ and $m = Ecd(z)$ for $z \in \mathbb Z[i]^{\frac{N}{2}}$. Then if $\Delta > N + 2B_{\text{clean}}$ then $Dcd(Dec(c)) = z$. 
\end{theorem}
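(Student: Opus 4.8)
The plan is to read off the decryption equation, control its noise by term, and then run a rounding argument at the decoding stage.

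\textbf{Step 1 (unwinding $\langle{c,sk}\rangle$).} First I would write the fresh ciphertext as $c = v\cdot pk + (m+e_0,e_1)$ with $pk=(b,a)$, $b=-as+e$, and $sk=(1,s)$, and compute $\langle{c,sk}\rangle$ directly. Expanding gives $c=(vb+m+e_0,\ va+e_1)$, so $\langle{c,sk}\rangle = vb + m + e_0 + (va+e_1)s = m + e_0 + e_1 s + v(b+as) = m + v e + e_0 + e_1 s \pmod{q_L}$, where I now write $e$ for the key-generation error. Thus the decryption noise is the single ring element $v e + e_0 + e_1 s$, and it remains only to bound its canonical embedding norm.

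\textbf{Step 2 (bounding the three noise terms).} Here I would use that $\sigma$ is a ring homomorphism, so $\sigma(ab)=\sigma(a)\odot\sigma(b)$ coordinate-wise and $||ab||_{\infty}^{\text{can}} \le ||a||_{\infty}^{\text{can}}\,||b||_{\infty}^{\text{can}}$, together with the standard high-probability tail estimates for the relevant distributions: for a single discrete Gaussian $e_0\leftarrow DG(\sigma^2)$ each entry $\sigma(e_0)_j$ is a sum of $N$ independent mean-zero terms of variance $\sigma^2$, so $||e_0||_{\infty}^{\text{can}} \le 6\sigma\sqrt{N}$; for a product of two small random elements the convolved coefficient has variance $\approx N V_a V_b$, so with $v\leftarrow ZO(0.5)$ (coefficient variance $1/2$) and $e\leftarrow DG(\sigma^2)$ one gets $||v e||_{\infty}^{\text{can}} \le 8\sqrt{2}\,\sigma N$, and with $e_1\leftarrow DG(\sigma^2)$ and $s\leftarrow HWT(h)$ (exactly $h$ nonzero $\pm 1$ coefficients, hence coefficient variance $h/N$) one gets $||e_1 s||_{\infty}^{\text{can}} \le 16\sigma\sqrt{hN}$. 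The triangle inequality then yields $||v e + e_0 + e_1 s||_{\infty}^{\text{can}} \le 8\sqrt{2}\,\sigma N + 6\sigma\sqrt{N} + 16\sigma\sqrt{hN} = B_{\text{clean}}$, which is the first assertion.

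\textbf{Step 3 (from the noise bound to exact decoding).} Now take $m=Ecd(z)$ with $z\in\mathbb{Z}[i]^{N/2}$. By construction of $Ecd$, the vector $\sigma(m)$ is $\Delta\cdot\pi^{-1}(z)$ rounded to the nearest point of $\sigma(R)$ in its orthogonal $\mathbb Z$-basis $\beta_i=\sigma(x^{i-1})$; since each $||\beta_i||_{\infty}=1$ and each of the $N$ coordinates is moved by at most $\tfrac{1}{2}$, we get $\sigma(m)=\Delta\pi^{-1}(z)+r$ with $||r||_{\infty}\le N/2$. Assuming $q_L$ is large enough that $\langle{c,sk}\rangle\pmod{q_L}$ does not wrap around (a standing hypothesis on the parameters), Steps 1--2 give $Dec(c)=m+e$ with $||e||_{\infty}^{\text{can}}\le B_{\text{clean}}$, i.e. $||\sigma(e)||_{\infty}\le B_{\text{clean}}$. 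Then $Dcd$ — apply $\sigma$, project by $\pi$ to $\mathbb C^{N/2}$, rescale by $\Delta^{-1}$, and round to $\mathbb{Z}[i]^{N/2}$ — produces $\Delta^{-1}\pi(\sigma(m+e)) = z + \Delta^{-1}\pi(r+\sigma(e))$, whose error term has sup-norm at most $\Delta^{-1}(N/2 + B_{\text{clean}})$. If $\Delta > N + 2B_{\text{clean}}$ this is strictly below $\tfrac{1}{2}$, so the real and imaginary part of each coordinate lie within $\tfrac{1}{2}$ of those of $z$, and rounding to the nearest Gaussian integer returns $z$ exactly.

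\textbf{Main obstacle.} Steps 1 and 3 are bookkeeping; the content is Step 2. The bare submultiplicative bound already gives $||v e||_{\infty}^{\text{can}} \le (6\sqrt{N/2})(6\sigma\sqrt{N}) = 18\sqrt{2}\,\sigma N$, which is too weak, so to reach the constant $8\sqrt{2}$ (and likewise $16$ for the $HWT$ term) one has to analyze the distribution of the \emph{product} coefficients directly and accept sub-exponential tail bounds with the particular constants fixed in the CKKS analysis; making these precise rather than "with overwhelming probability" is the delicate point. A secondary item to pin down is the explicit lower bound on $q_L$ guaranteeing that $\langle{c,sk}\rangle\pmod{q_L}$ really equals $m+e$ and not a shifted representative.
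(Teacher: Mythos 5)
Your proposal follows essentially the same route as the paper's proof: the same decomposition $\langle c,sk\rangle - m = v e + e_0 + e_1 s$, the same triangle-inequality split into the three bounds $8\sqrt{2}\sigma N$, $6\sigma\sqrt{N}$, $16\sigma\sqrt{hN}$ (which the paper dismisses as ``heuristic arguments'' and you correctly identify as the only nontrivial content), and the same decoding step via $B' = B_{\text{clean}} + N/2$ and the condition $\Delta^{-1}B' < \tfrac{1}{2}$. Your version is in fact more careful than the paper's --- you derive the decryption identity explicitly, justify the variance heuristics, and flag the modular-wraparound assumption on $q_L$ that the paper leaves implicit --- and you silently correct the statement's typo $8\sqrt{2}$ to $8\sqrt{2}\sigma N$, consistent with the proof body.
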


\begin{proof}
We see that $||\langle{c, sk}\rangle - m||_{\infty}^{\text{can}} = ||v\cdot e + e_0 + e_1\cdot s||_{\infty}^{\text{can}}$. By triangle inequality, we see this is less than or equal to $$||v\cdot e||_{\infty}^{\text{can}} + ||e_0||_{\infty}^{\text{can}} + ||e_1\cdot s||_{\infty}^{\text{can}}$$ By some heuristic arguments, we conclude this is less than or equal to(with high probability) $$8\sqrt{2}\sigma N + 6\sigma \sqrt{N} + 16\sigma\sqrt{hN}$$
\\
\\
We prove the second part of the proposition. If $z \in \mathbb C^{\frac{N}{2}}$, an encryption of $m = Ecd(z)$ is also an encryption of $\Delta \sigma^{-1}\circ \pi^{-1}(z)$ with an error bound $B' = B_{\text{clean}} + \frac{N}{2}$. Therefore, we can bound our error polynomial $e$ by $B'$. We consider $Dcd(m + B') = \sigma(\pi(\left\lfloor{\Delta^{-1}(m + B')}\right\rceil$. For this to equal $\sigma(\pi(\left\lfloor{\Delta^{-1}(m)}\right\rceil$ as desired, it must be the case that $\Delta^{-1}B' < \frac{1}{2}$. Substituting $B' = B_{\text{clean}} + \frac{N}{2}$ into the inequality, we get the desired result.
\end{proof}
CKKS was proposed a few years ago. It is the first scheme which allows for approximate homomorphic computation on real/complex messages. In my REU program at Clemson last year, our group proposed another homomorphic encryption scheme. We do not have a rescaling procedure, so we can only compute on integer messages. In order to reduce error, we instead use a technique called \textit{modulus reduction}. Our scheme was inspired by the closely related Braskersky, Gentry, Vaikuntanathan (BGV) scheme. 
\newpage
\section{References}
1) Vinod Vaikuntanathan's notes (lecture 1 and 2) on $\textit{Lattices in Computer Science}$ 
\\
\\
(link: https://people.csail.mit.edu/vinodv/COURSES/CSC2414-F11/index.html)
\\
\\
2) Vinod Vaikuntanathan's notes (lecture 1) on $\textit{Lattices, Learning With Errors, and Post Quantum Cryptography}$ 
\\
\\(link: https://people.csail.mit.edu/vinodv/CS294/) 
\\
\\
3)Rafael Pass, Abhi Shelat: \textit{A Course in Cryptography}
\\
\\
4) Jung Hee Cheon, Andrey Kim, Miran Kim, Yongsoo Song: \textit{Homomorphic Encryption for Arithmetic of Approximate Numbers}
\\
\\
5) Oded Regev: \textit{On Lattices, Learning With Errors, Random Linear Codes, and Cryptography}
\\
\\
6) Vadim Lyubashevsky, Chris Peikert, Oded Regev: \textit{On Ideal Lattices and Learning With Errors Over Rings}
\\
\\
7) Jeffrey Hoffstein, Jill Pipher, Joseph Silverman: \textit{An Introduction to Mathematical Cryptography}
\\
\\
8) Braskerski, Gentry, Vaikuntanathan: \textit{Fully Homomorphic Encryption Without Bootstrapping}
\end{document}